\newcommand{\iec}{\mbox{i.\,e.\,}}
\newcommand{\egc}{\mbox{e.\,g.\,}}
\newcommand{\pb}[2]{\ensuremath{\left\{ #1 , #2 \right\} }}
\newcommand{\dr}[1]{\ensuremath{\mathrm{d} #1\,}}
\newcommand{\mc}[1]{\ensuremath{\mathcal{#1}}}
\newcommand{\edf}{\ensuremath{=_{_{df}}}}
\newcommand{\ket}[1]{\ensuremath{\left|  #1 \right\rangle}}
\newcommand{\bra}[1]{\ensuremath{\left\langle #1 \right|}}
\newcommand{\bk}[2]{\ensuremath{\left\langle #1 | #2 \right\rangle}}
\newcommand{\matel}[3]{\ensuremath{\bra{#1} #2 \ket{#3}}}
\newcommand{\op}[1]{\ensuremath{\widehat{\textsf{\ensuremath{#1}}}}}
\newcommand{\tr}{\textsf{Tr}}
\newcommand{\be}{\begin{equation}}
\newcommand{\ee}{\end{equation}}
\newcommand{\e}[1]{\mathrm{e}^{#1}}
\newtheorem{theorem}{Theorem}
\newenvironment{proof}[1][Proof]{\begin{trivlist}
\item[\hskip \labelsep {\bfseries #1}]}{\end{trivlist}}
\begin{document}

\title{Recurrence Theorems: a unified account}
\author{David Wallace}
\email{david.wallace@balliol.ox.ac.uk}
\affiliation{Balliol College, University of Oxford}
\date{\today}
\begin{abstract}
I discuss classical and quantum recurrence theorems in a unified manner, treating both as generalisations of the fact that a system with a finite state space only has so many places to go. Along the way I prove versions of the recurrence theorem applicable to dynamics on linear and metric spaces, and make some comments about applications of the classical recurrence theorem in the foundations of statistical mechanics.
\end{abstract}

\maketitle

\section{Introduction}

The Poincar\'{e} recurrence theorem plays an important role in the foundations of statistical mechanics, dating back to Zermelo's original objection to Boltzmann's H theorem.\footnote{See \cite{brownboltzmann}, and references therein} The theorem exists in both classical and 
quantum  forms, but only the classical version is widely discussed. This is unfortunate, as there are foundationally important differences between the two.
Furthermore, in proofs of the classical theorem measure-theoretic technicalities can obscure the very elementary idea underpinning the theorem, and standard proofs of the quantum version obscure the links between classical and quantum versions.

In this paper I provide a unified treatment of recurrence, beginning with the toy example of a system with discretely many states and based around the intuition that if a system only has finitely many places to go it will eventually enter one such place twice. In the discrete case this is literally how the theorem works; I generalise it to dynamics on spaces equipped with notions of conserved volume, of conserved inner product, and (in the Appendix) of conserved length, showing that in each case the conserved quantity extends the basic idea from discrete to continuous state spaces. I illustrate how recurrence theorems for classical and quantum mechanics, respectively, are instances of the first two extensions. I stress the distinction between recurrence (every point recurs after some time interval) and \emph{uniform} recurrence (there is some time interval after which every point recurs). On the basis of these results, I observe that two properties of recurrence that have been frequently discussed in the literature are just artifacts of classical mechanics.

\section{Warm-up: recurrence in finite systems}

At a very abstract level, we can define a \emph{dynamical system} as specified by:
\begin{itemize}
\item A set \mc{S} of \emph{states}
\item An evolution rule $U$, which maps $\mc{S}$ to itself.
\end{itemize}
The idea is that $U$ maps each state to the state into which it will evolve after 1 time-step. We can then define a time evolution operator $U(n)$ as follows: $U(1)=U$, $U(2)=U\cdot U$, $U(3)=U\cdot U \cdot U$, etc. (or, more formally: $U(1)=U$, $U(n+1)=U\cdot U(n)$), satisfying $U(n+m)=U(n)U(m)$). Although in this model of dynamics time is discrete, this is no real limitation: given a continuous-time evolution operator $V(t)$, we can pick some arbitrarily short time $\tau$ and define $U=V(\tau)$; then $U(n)\edf V(n\tau)$.

The idea of recurrence is that every state, as it evolves forward in time, in some sense eventually returns to its original state. If the state space $\mc{S}$ has finitely many points, we can interpret this entirely literally. Suppose we say that:
\begin{itemize}
\item A dynamical system is exactly recurrent if for every $s\in \mc{S}$, there is some $n_s>0$ such that $U(n_s)s=s$.
\item A dynamical system is invertible if $U$ is a one-to-one map: that is, if for every $s\in \mc{S}$ there is a unique $t\in \mc{S}$ such that $U(t)=s$. (From this it follows that $U(n)$ can be extended uniquely to negative $n$ while preserving $U(n+m)=U(n)U(m)$.
\item A dynamical system is finite if its state space $\mc{S}$ contains  finitely many points.
\end{itemize}
Then we can easily prove the
\begin{theorem}
\textbf{(Finite Recurrence Theorem)}
Any finite invertible system is exactly recurrent.
\end{theorem}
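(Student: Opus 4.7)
The plan is to prove this by a straightforward pigeonhole argument on the forward orbit of an arbitrary state, combined with invertibility to ``shift'' the resulting coincidence back to the starting point.

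First I would fix an arbitrary $s\in \mc{S}$ and consider the infinite sequence of iterates $s, U(1)s, U(2)s, U(3)s, \ldots$ Since $\mc{S}$ contains only finitely many points by hypothesis, this sequence cannot consist entirely of distinct elements. Hence there exist integers $0\le i<j$ with $U(i)s=U(j)s$.

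Next I would use invertibility. Since $U$ is a bijection on $\mc{S}$, the iterate $U(i)$ is also a bijection, with inverse (well-defined by the extension of $U(n)$ to negative $n$ noted in the definition) $U(-i)$. Applying $U(-i)$ to both sides of $U(i)s=U(j)s$ and using the composition law $U(n+m)=U(n)U(m)$, I obtain $s=U(j-i)s$. Setting $n_s\edf j-i>0$ gives precisely the condition in the definition of exact recurrence. Since $s$ was arbitrary, the system is exactly recurrent.

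There is no real obstacle here; the only point requiring care is ensuring that invertibility is genuinely used (as opposed to merely finiteness), since without it one could only conclude that the orbit is eventually periodic, not that $s$ itself lies on the periodic portion. The cancellation step $U(i)s=U(j)s\Rightarrow s=U(j-i)s$ is exactly where invertibility is invoked, and a brief remark to that effect would make the \role of the hypothesis transparent to the reader.
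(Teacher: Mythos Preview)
Your proposal is correct and is essentially the same argument as the paper's: a pigeonhole step on the forward iterates produces a collision $U(i)s=U(j)s$, and invertibility is then used to cancel $U(i)$ and conclude $s=U(j-i)s$. The only cosmetic difference is that the paper bounds the list at $N+1$ iterates (where $N=|\mc{S}|$) rather than invoking an infinite sequence, but this is immaterial.
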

\begin{proof}
Let $\mc{S}$ contain $N$ members. Then for any given $s$, not all of 
\[s, U(1)s, U(2)s, \ldots U(N)s\] can be distinct, so there must be distinct $n,m$ with $n<m$ such that $U(n)s=U(m)s$. Then $s=U(-n)U(n)s=U(-n)U(m)s=U(m-n)s$. $\Box$
\end{proof}
Informally: since there are only finitely many states, the evolution of $s$ must eventually pass through the same state twice; because the dynamics is given by a one-to-one map, the \emph{first} state to be entered twice must be the original state, so that no two states are mapped to the same state.

Note that both requirements are needed. If the system is not finite, it can continue exploring new states forever; if it is not invertible, it can get stuck at some state or set of states and never get out again.

Now consider the space $\mc{U}(\mc{S})$ of \emph{all} maps from \mc{S} to itself. Given some evolution operator $U$, we can define a function $L_U$ from $\mc{U}(\mc{S})$ to itself by
\be 
L_U(V)=U\cdot V.
\ee
Formally, we can regard $\mc{U}(\mc{S})$ as a state space, and $L_U$ as an evolution operator on that state space, so as to create a new dynamical system. If $\mc{S}$ is finite, so is $\mc{U}(\mc{S})$ (there are $N^N$ maps from an $N$-member set to itself); if $U$ is invertible, so is $L_U$ $(L_{U^{-1}}\cdot L_U (V)=U^{-1}\cdot U \cdot V=V$). So in fact, this dynamical system is also recurrent. In particular, if $1$ is the identity map, then there exists some $n$ such that
\be 
(L_U)^n(1)\equiv U(n) \cdot 1 =U(n)=1.
\ee
But if $U(n)=1$, then $U(n)s=s$ for all $s\in \mc{S}$: that is, there is some \emph{fixed} time $n$ such that after $n$ time-steps, every state has been time-evolved back to itself Let us call this stronger property \emph{uniform recurrence}: formally, a dynamical system is  \emph{uniformly exactly recurrent} if for some fixed $n>0$, \emph{all} states $s\in\mc{S}$ satisfy $U(n)s=s$. 

We have now proved the
\begin{theorem}
\textbf{(Finite Uniform Recurrence Theorem)}
Any finite invertible system is uniformly exactly recurrent.
\end{theorem}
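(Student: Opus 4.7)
The plan is to promote the dynamical system $(\mc{S},U)$ to a higher-order dynamical system whose state space is $\mc{U}(\mc{S})$, the set of all self-maps of $\mc{S}$, and whose evolution operator is the left-multiplication map $L_U$. The paragraph preceding the theorem has essentially assembled this construction; it only remains to apply the Finite Recurrence Theorem to it and unpack the output.

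First I would verify that $(\mc{U}(\mc{S}),L_U)$ satisfies the hypotheses of the Finite Recurrence Theorem. Finiteness follows from $|\mc{U}(\mc{S})| = N^N$, and invertibility of $L_U$ follows from invertibility of $U$, since $L_{U^{-1}}$ provides a two-sided inverse via $L_{U^{-1}}(L_U(V)) = U^{-1} U V = V$. Both observations were already made in the surrounding text, so only a brief identification of them with the hypotheses of Theorem~1 is needed.

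Next I would apply the Finite Recurrence Theorem to the particular state $1 \in \mc{U}(\mc{S})$, namely the identity map on $\mc{S}$. The theorem supplies some $n > 0$ with $(L_U)^n(1) = 1$, and since $(L_U)^n(1) = U(n) \cdot 1 = U(n)$, this says that $U(n)$ \emph{is} the identity map on $\mc{S}$, so $U(n)s = s$ for every $s \in \mc{S}$. That is precisely the definition of uniform exact recurrence.

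I do not expect a genuine obstacle here: every ingredient has already been introduced in the preceding discussion, so the proof is a bookkeeping exercise. The only subtle point worth flagging is the contrast with the more naive alternative of applying the Finite Recurrence Theorem pointwise to each $s$. That route yields a period $n_s$ attached to each state with no common value, and one would then have to take the least common multiple of the $n_s$ (which is finite here, since $\mc{S}$ is) to recover uniformity. Treating the identity map itself as a recurring state of the lifted dynamical system is more transparent, and it is the version of the argument that will translate most naturally to the continuous analogues developed later in the paper.
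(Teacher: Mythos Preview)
Your proposal is correct and follows exactly the paper's own argument: lift to the dynamical system $(\mc{U}(\mc{S}),L_U)$, check finiteness and invertibility, and apply the Finite Recurrence Theorem to the identity map. Your remark contrasting this with the pointwise-LCM alternative is a nice addendum but not part of the paper's proof.
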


In essence, all recurrence results we will consider have this form: recurrence occurs just because the system only has so many places to go, and runs out of them, and because its one-to-one dynamics means that this can only happen if it returns to its original state. The problem is that in most cases of physical interest, the state space has infinitely many points in it, so simple arguments based on state-counting will not work. We need instead to find some precise sense in which the system has only finitely many \emph{interestingly different} places to go, so that it returns to some state \emph{virtually the same as} its initial state. This will be our task for the rest of the paper.

\section{Recurrence in volume}

Suppose that the state space $\mc{S}$ of a given dynamical system is equipped with a \emph{measure}, a rule that defines a notion of volume on the subsets of \mc{S}. (Informally, a measure $\mu$ is a rule that associates to a subset $S\subset \mc{S}$ a value $\mu(S)$ which is a nonnegative real number or $\infty$, such that the volume of a union of disjoint sets is the sum of the volumes of the individual sets and the volume of the empty set is zero; for a more formal definition, see \cite{rudinrealcomplex}. Not all subsets can be assigned a measure (there are pathological `non-measureable' sets) but any remotely simple set is measureable.) We define a \emph{measureable dynamical system} as a dynamical system equipped with a measure in this way. We can now use this notion to define an appropriate generalisation of recurrence.
\begin{itemize}
\item A measureable dynamical system is \emph{recurrent in volume} iff for any measureable $S\subset \mc{S}$, the set $T\subset S$ of points in $S$ that do not eventually re-enter $S$ under dynamical evolution has measure zero.
\end{itemize}
To put this more informally: pick any state, and choose any non-zero-volume region containing that state. If the system is recurrent in volume, virtually all points in that region will eventually return to it --- ``virtually'' in the sense that the set of those which do not has measure zero. By taking the region arbitrarily small, we can ensure that (in some sense) almost all states arbitrarily close to our original state will return arbitrarily close to it.

To prove recurrence in volume, we need assumptions analogous to finiteness and invertibility. They are:
\begin{itemize}
\item A measureable dynamical system is \emph{finite in measure} iff $\mu(\mc{S})<\infty$.
\item A measureable dynamical system is \emph{volume-preserving} iff $\mu(U(S))=\mu(S)$ for all measureable $S$.
\end{itemize}
We can then prove the
\begin{theorem}
\textbf{(Volume Recurrence Theorem)}
Any measureable dynamical system that is finite and volume-preserving is recurrent in volume.
\end{theorem}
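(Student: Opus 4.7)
The plan is to generalise the pigeonhole argument of the Finite Recurrence Theorem by replacing cardinality with measure. The intuitive picture is that if $\mu(\mc{S})$ is finite and the dynamics preserves volume, the system simply has only so much room to fit disjoint sets of positive measure; the set of non-returning points, if it had positive measure, would produce infinitely many disjoint images of itself under the dynamics and so overflow the available volume.

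Concretely, fix an arbitrary measurable $S \subset \mc{S}$ and set
\be
T = \{ s \in S : U(n) s \notin S \text{ for all } n \geq 1 \},
\ee
which can be rewritten as $T = S \setminus \bigcup_{n \geq 1} U(n)^{-1}(S)$ and is therefore measurable. The object of the proof is to show $\mu(T) = 0$. Consider the forward iterates $T_n = U(n) T$ for $n = 0, 1, 2, \ldots$. Granting that these are pairwise disjoint, volume-preservation gives $\mu(T_n) = \mu(T)$ for every $n$, and finiteness of $\mu(\mc{S})$ forces
\be
\sum_{n=0}^{\infty} \mu(T) = \sum_{n=0}^{\infty} \mu(T_n) \leq \mu(\mc{S}) < \infty,
\ee
which can only hold if $\mu(T) = 0$.

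The one step requiring actual work is the pairwise disjointness of the $T_n$, and this is where the argument mirrors the finite case. Suppose $n < m$ and $x = U(n) s_1 = U(m) s_2$ for some $s_1, s_2 \in T$. Using $U(m) = U(n) \cdot U(m-n)$ and appealing to invertibility (implicit in the framework, and indispensable for the same reason as in Theorem 1) yields $s_1 = U(m-n) s_2$. But $s_1 \in S$, so $s_2$ does re-enter $S$ at the positive time $m - n$, contradicting $s_2 \in T$.

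The main obstacle I anticipate is conceptual rather than computational: the paper's notion of volume-preservation is stated as $\mu(U(S)) = \mu(S)$, so the disjointness step tacitly requires $U$ to be an invertible bi-measurable map — just as in the finite case, where finiteness alone did not suffice without invertibility. With that in hand, the disjointness argument goes through and the remainder is routine measure-theoretic bookkeeping (checking that $T$ and its iterates are measurable, and applying countable additivity).
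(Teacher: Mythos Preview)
Your proof is correct and follows essentially the same measure-theoretic pigeonhole argument as the paper: iterate the non-returning set $T$, use volume-preservation to see all iterates have the same measure, and use finiteness of $\mu(\mc{S})$ to force $\mu(T)=0$. The only difference is that you prove the $T_n$ are genuinely pairwise disjoint (making the appeal to invertibility explicit), whereas the paper argues only that $\mu(T_n\cap T_m)=0$ and then passes through auxiliary disjoint sets $T'_n$; your version is tidier and avoids that detour.
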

\begin{proof}
Pick a measureable subset $S\subset\mc{S}$, and let $T\subset S$ be the set of points in $S$ that never return to $S$, \iec the points $s\in \mc{S}$ which satisfy $U(n)s\notin S$ for all $n>0$. Now define $T_n=U(n)T$. Since the dynamics is measure-preserving, for any $n,m$ (with $n<m$) we have $\mu(T_n \cap T_m)=\mu(T \cap T_{m-n})=0$, \iec the overlap of any two $T_n$ has measure zero. So if we define $T'_m$ as the set of points in $T_m$ but not in any of the $T_n$ with $n<m$, then $\mu(T'_n)=\mu(T_n)$. 

The $T'_n$ are mutually disjoint, so if $\mc{T}=\cup_n T'_n$ (and so also $\mc{T}=\cup_n T_n$), $\mu(\mc{T})=\mu(T')+\mu(T'_1)+\mu(T'_2)+\cdots$.

But since the dynamics is volume-preserving, $\mu(T)=\mu(T_i)=\mu(T'_i)$ for all $i$. There are two ways of satisfying this requirement: 
\begin{itemize}
\item[(i)]$\mu(\mc{T})=\infty.$
\item[(ii)]$\mu(T)=0$.
\end{itemize}
But (i) is ruled out by the assumption that the system is finite in measure, so (ii) is the only possibility. $\Box$
\end{proof}

How can we define a volume-based version of uniform recurrence? It's not entirely clear, but it's moot in any case, because no such result is true. Consider, for instance, the following dynamical system: the state space is a cylinder of radius 1 and height 1, so that states are labelled by pairs $(z,\theta)$ with $z\in [0,1]$ and $\theta\in[0,2\pi)$. If the dynamical rule is $(z,\theta)\rightarrow (z,[\theta+z]\,\mathrm{ modulo }\, 2\pi)$ (i.e., if it rotates each point an amount proportional to $z$) then the ordinary volume of this cylinder provides a preserved measure, so that the system is a finite, invertible, measure-preserving dynamical system. The volume recurrence theorem holds, and indeed this is easy to see informally: for any $z$, and any $\epsilon$, we can find $n$ so that $n z$ is within $\epsilon$ of being an integer multiple of $2\pi$.

If the system were \emph{uniformly} recurrent, though, we would expect (e.g.) the line of points $(z,0)$ for arbitrary $z$ to return arbitrarily close to itself. But it's clear that this won't happen: the line will wind more and more tightly around the cylinder, and never unwind. 

To get further mathematical insight into why this happens, consider that the space of measure-preserving functions on a (non-finite) space is typically very large --- so large, in fact, that there is no natural way to define a volume measure on it. So the analogue of our finite-case strategy, where we considered the space of maps itself as a dynamical system, will not work in this case.

We will shortly see that there are alternative (and physically more relevant) generalisations of recurrence in which uniform recurrence \emph{does} hold. Firstly, though, we should stop and consider the most important application of volume recurrence.

\section{Application: classical mechanics}

Classical mechanics in the Hamiltonian formalism (for given time-step $\tau$, and assuming finitely many degrees of freedom) is a dynamical system: the state space is phase space, the evolution rule is evolution for time $\tau$ under Hamilton's equations, and the volume is given by\footnote{More formally, this is the volume defined by the symplectic structure $\omega$ on phase space: $\mu(V)=\int_V \omega\wedge \omega \cdots \wedge \omega$, where $\omega\wedge \omega \cdots \wedge \omega$ is the $n$-fold exterior product of $\omega$ with itself.} 
\be 
\mu(V)=\int_V\dr{q^1}\dr{p_1} \cdots \dr{q^n} \dr{p_n}
\ee
By Liouville's theorem, this is conserved under Hamiltonian evolution; as such, classical mechanics is an invertible, measure-preserving, measureable dynamical system. 

It is not a finite system (phase space, in general, has infinite volume, since momentum can increase without limit). However, Hamiltonian dynamics conserves energy, and it is entirely possible for the subset of phase space corresponding to energies between $E$ and $\delta E$ (for arbitrarily small $\delta E$) to have finite volume.\footnote{With only slightly more care, we can prove a version of the recurrence theorem that applies if \emph{any given energy hypersurface} is finite in volume. However, as there can be no physically interesting distinction between the two cases, even this slight increase in care seems unnecessary.} This gives us a form of the 
\begin{theorem}
\textbf{(Poincar\'{e} recurrence theorem)} For any classical-mechanical system with finite-dimensional phase space, if the subset of phase space  with energies lying in some band $(E,E+\delta E)$ has finite Liouville volume then the restriction of the dynamics to that subset is recurrent in volume.
\end{theorem}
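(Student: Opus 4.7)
The plan is to reduce this statement directly to the Volume Recurrence Theorem already proved, by verifying that the restriction of Hamiltonian evolution to the energy band $S_E=\{s : E<H(s)<E+\delta E\}$ is a well-defined measureable, finite, volume-preserving dynamical system in the sense defined earlier.

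First I would check that $U=V(\tau)$ really does restrict to a map of $S_E$ into itself. This is just conservation of energy: since $H$ is constant along Hamiltonian trajectories, $E<H(s)<E+\delta E$ implies $E<H(U(s))<E+\delta E$, so $U(S_E)\subset S_E$. Applying the same argument to the (well-defined, by time-reversibility) inverse flow shows $U^{-1}(S_E)\subset S_E$, so $U$ is a bijection of $S_E$ onto itself and the restricted system inherits invertibility from the full system. Second, I would use Liouville's theorem, already invoked in the discussion preceding the theorem, to conclude that $U$ preserves the Liouville measure restricted to $S_E$: since $\mu(U(V))=\mu(V)$ for every measureable $V\subset \mc{S}$, it certainly holds for every measureable $V\subset S_E$. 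Finiteness in measure of the restricted system is simply the hypothesis $\mu(S_E)<\infty$ that appears in the statement.

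Having verified all three hypotheses, I would then directly apply the Volume Recurrence Theorem to the restricted system to obtain the conclusion that, for any measureable $S\subset S_E$, the set of points in $S$ that never return to $S$ under iteration of $U$ has Liouville measure zero.

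The one place requiring a small amount of care --- and the only real obstacle, though a very mild one --- is making sure no conceptual sleight of hand occurs at the boundary between the full phase space and the restricted system. One must note that a measureable subset of $S_E$ is automatically measureable as a subset of phase space, that the notion of return used in the conclusion (return into $S$, rather than merely into $S_E$ or into the full phase space) is exactly the one delivered by the Volume Recurrence Theorem applied inside $S_E$, and that energy conservation guarantees trajectories never leave $S_E$ so there is no ambiguity about which ambient space is being used. Once this is noted, nothing further is needed.
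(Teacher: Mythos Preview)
Your proposal is correct and matches the paper's approach exactly: the paper does not give a separate formal proof but presents the Poincar\'{e} theorem as an immediate application of the Volume Recurrence Theorem, having just observed that Hamiltonian dynamics is invertible and Liouville-measure-preserving, and that energy conservation lets one restrict to a finite-measure energy band. Your write-up is, if anything, slightly more explicit than the paper about why the restriction to $S_E$ is a bijection and about the consistency of measurability between the band and the full phase space.
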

As an important example of this, consider a system whose Hamiltonian has the form
\be 
H(q^1,p_1,\ldots q^n,p_n)=\sum_{i,j}C^{ij}p_i p_j + V(q^1,\ldots q^n),
\ee
where $V$ is bounded below, $C^{ij}$ is positive definite, and the positions are confined to a finite region. Then the momenta will also be confined to a finite region for given energy, and the Poincar\'{e} recurrence theorem will hold. (For a case where the positions are not confined, consider a particle moving freely in empty space (clearly this will not recur); for a case where $V$ is not bounded below, consider Newtonian gravity.)

\section{Recurrence in linear dynamical theories}

We say that a dynamical theory is \emph{linear} if its state space is some subspace of an inner product space (whose inner product we write $\langle \cdot , \cdot \rangle$). Provided the space is of some finite dimension $N$, we can use this inner product to define a volume on the space, informally\footnote{Formally (at least if the state space is an open subset of the inner product space) pick an orthonormal basis; use this basis to identify the state space with a subset of $R^N$; use this identification to carry the Lebesgue measure over to the state space.} just by defining the volume of a cube of side $\lambda$ as $\lambda^N$. Also given the inner product, we can define the distance $d(x,y)$ between two states $x$, $y$ as the length of the vector between them: that is, $d(x,y)=\sqrt{\langle x-y,x-y\rangle}$. This function is sometimes called the \emph{metric}.

We now define a linear dynamical theory as 
\begin{itemize}
\item \emph{bounded} if for some fixed $D$, $d(s,t)<D$ for all states $s,t$. 
\item \emph{inner-product-preserving} if for all states $s,t$, $\langle U(s),U(t)\rangle=\langle s,t\rangle$.
\end{itemize}
We then have 
\begin{theorem}
\textbf{(Linear recurrence theorem, preliminary form)} Any linear dynamical theory which is finite-dimensional, bounded, invertible and inner-product-preserving is recurrent in volume with respect to the volume function defined by the inner product.
\end{theorem}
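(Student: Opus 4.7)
The plan is to derive the result from the Volume Recurrence Theorem already proved, by checking that its two hypotheses --- finite in measure and volume-preserving --- follow from the hypotheses at hand. Invertibility is shared between the two theorems; so only finiteness in measure and volume-preservation need to be extracted from boundedness, finite-dimensionality, and inner-product preservation.

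First I would fix an orthonormal basis, identifying the ambient $N$-dimensional inner product space with $\mathbb{R}^N$ and the inner-product-derived volume with ordinary Lebesgue measure on $\mathbb{R}^N$. Boundedness then places all of $\mc{S}$ inside a metric ball of radius $D$ about any chosen reference state, which in this identification is an ordinary Euclidean ball of finite Lebesgue volume (of order $D^N$). Monotonicity of measure gives $\mu(\mc{S}) < \infty$, so the system is finite in measure.

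For volume-preservation I would argue that, since $U$ preserves $\langle \cdot , \cdot \rangle$, it sends the chosen orthonormal basis to another orthonormal basis, and so --- treating $U$ as a linear map, as it will be in the quantum-mechanical application to come --- is represented in the original basis by an orthogonal (or, in the complex case, unitary) matrix. Such matrices have determinant of modulus one, and Lebesgue measure transforms under a linear substitution by the absolute value of that determinant; so $\mu(U(S)) = \mu(S)$ for every measurable $S\subset\mc{S}$. With both hypotheses verified, the Volume Recurrence Theorem immediately delivers recurrence in volume.

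The main obstacle is not any deep mathematical difficulty but a small mismatch in the framework: the general definition of a dynamical system treats $U$ as an arbitrary self-map of $\mc{S}$, whereas the matrix-determinant calculation above implicitly requires $U$ to be linear. In the target applications linearity is built in by hand; more generally, inner-product preservation together with $U(0) = 0$ (itself immediate from $\langle U(0), U(0)\rangle = \langle 0,0\rangle = 0$ when the origin lies in $\mc{S}$) forces additivity and homogeneity in a few lines, so the gap is easily closed.
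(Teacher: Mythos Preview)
Your proposal is correct and follows essentially the same route as the paper: verify that boundedness gives finite measure and that inner-product preservation gives volume preservation, then invoke the Volume Recurrence Theorem. If anything, you are more careful than the paper, which simply asserts the volume-preservation step and does not flag the linearity issue you raise.
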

\begin{proof}
If the system is inner-product-preserving, its dynamics preserves the volume measure. If the system is bounded, (say, with $d(s,t)<D$ for some $D$ and all states $s,t$), then the state space is contained within a cube of side $D$ and hence volume $D^N$, and so has finite volume. The result is now a corollary of the volume recurrence theorem. $\Box$
\end{proof}

Since quantum-mechanical systems are linear systems (taking the state space to be Hilbert-space states of norm 1), this result applies in particular to finite-dimensional quantum systems, and indeed can be extended to infinite-dimensional ones under certain conditions (as we will see). But this route to quantum recurrence, although it has the virtue of hewing closely to the classical model, understates the strength of recurrence that actually occurs in quantum theory --- essentially because preservation of an inner product is a much stronger condition than preservation of a volume.

Indeed, the mere presence of an inner product allows us to define an alternative concept of recurrence: we can consider a system to be recurrent if any state returns arbitrarily close to its starting point with respect to this metric. More precisely,
\begin{itemize}
\item A linear dynamical system is \emph{recurrent in metric} iff for any state $s$, and any $\epsilon>0$, there is some $n_s$ such that $d(U(n_s)s,s)<\epsilon$.
\item A linear dynamical system is \emph{uniformly recurrent in metric} iff for any $\epsilon>0$, there is some $n$ such that for all states $s$, $d(U(n)s,s)<\epsilon$.
\end{itemize}

We can then prove a stronger result:
\begin{theorem}
\textbf{(Linear recurrence theorem)} Any linear dynamical theory which is finite-dimensional, bounded, and inner-product-preserving is uniformly recurrent in metric.
\end{theorem}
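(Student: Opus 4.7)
The plan is to mirror the Finite Uniform Recurrence Theorem by looking at the sequence of iterates $U(1), U(2), U(3), \ldots$ inside a compact set of operators, letting compactness play the \role that literal finiteness of $\mc{U}(\mc{S})$ played in the warm-up argument. First, since the state space sits inside a finite-dimensional inner product space and $U$ preserves inner products, $U$ extends to a linear isometry of the ambient $N$-dimensional space; in finite dimension any such map is automatically bijective (from $Us=0$ one gets $\langle s,s\rangle = \langle Us,Us\rangle = 0$, so $U$ is injective, hence bijective), which is why the ``invertible'' hypothesis could quietly be dropped from the previous theorem. Boundedness of the state space means that, after choosing any fixed state as origin, every state has norm at most some finite $R$.

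Next I would use the fact that the set of linear isometries of an $N$-dimensional inner product space is a compact subset of the space of operators in the operator norm $\|\cdot\|$ --- it is a closed, bounded subset of a finite-dimensional real vector space. Applied to the sequence $\{U(k)\}_{k\geq 1}$, compactness implies that for any $\delta > 0$ there are two distinct indices $n < m$ with $\|U(m) - U(n)\| < \delta$. Left-multiplying by $U(-n)$, which is itself an operator-norm isometry, yields $\|U(m-n) - I\| < \delta$.

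Given $\epsilon > 0$, I would then set $\delta = \epsilon/R$ and apply the previous step. For every state $s$,
\[ d(U(m-n)s,\, s) \;=\; \|(U(m-n)-I)s\| \;\leq\; \|U(m-n)-I\|\cdot\|s\| \;<\; \delta \cdot R \;=\; \epsilon, \]
which is precisely uniform recurrence in metric with fixed time-step $m-n>0$.

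The main obstacle, as throughout the paper, is identifying the right continuous analogue of ``finitely many places to go'': here it is the compactness of the isometry group, used to extract a near-coincidence $U(m) \approx U(n)$ out of an infinite sequence of iterates. Once that step is in hand, the remainder is routine bookkeeping, exploiting the fact that isometries have operator norm exactly $1$ and that the state space, being bounded, has uniformly bounded vector magnitudes.
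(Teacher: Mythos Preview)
Your argument is correct and follows the same overall strategy as the paper: lift the dynamics to the space of operators, observe that the iterates $U(n)$ live in a bounded (hence, in finite dimension, totally bounded) set, extract a near-coincidence $U(m)\approx U(n)$ by pigeonhole, and cancel to get $U(m-n)\approx I$. The differences are in execution. The paper first proves \emph{ordinary} metric recurrence on the state space via an explicit ball-covering argument, then equips the operator space with the Hilbert--Schmidt inner product $\langle A,B\rangle=\tr(A^\dagger B)$ and reapplies that recurrence result there, finally passing from operator closeness to state closeness via the inequality $\tr(A^\dagger A)\,\langle v,v\rangle\geq\langle Av,Av\rangle$. You instead work directly in the operator norm and invoke compactness of the isometry group as a black box, which makes the last step immediate (the submultiplicative bound $\|As\|\le\|A\|\,\|s\|$ is the definition of the operator norm) and bypasses any separate ordinary-recurrence lemma. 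Your route is a little slicker; the paper's has the expository virtue of staying inside the inner-product framework throughout, so that the passage from states to operators is visibly the same trick as in the finite case. Your aside explaining why invertibility can be dropped is also a nice touch that the paper leaves implicit.
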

\begin{proof}
We begin by proving ordinary recurrence. Suppose that the dynamical theory has a state space $\mc{S}$ that is a subset of an $N$-dimensional vector space \mc{V}, and pick some orthonormal set of vectors $v_1, \ldots v_n$. For any $N$-tuple of real\footnote{We're not really assuming that the space is a \emph{real} rather than complex space here; an $N$-dimensional complex vector space is also a $2N$-dimensional real vector space.} numbers $x^1,\ldots x^n$, and any $\epsilon>0$, we can construct the ball 
\be B_\epsilon(x^1,\ldots x^n)=\{x\in \mc{V}:d(x,\sum_i x^iv_i)<\epsilon/2\}.\ee
If $\mc{B}$ is the set of such balls defined by $N$-tuples $n_1 \epsilon/2\sqrt{N}, \ldots n_N\epsilon/2\sqrt{N}$ for arbitrary (positive or negative) integers $n_1, \ldots n_N$, it is easy to verify that 
\begin{itemize}
\item[(i)] every point in \mc{V} is in some element of \mc{B}; 
\item[(ii)] if \mc{S} is bounded, there is some \emph{finite} subset $B_1,\ldots B_M$ such that every point in $\mc{V}$ is in one of the $B_i$. 
\end{itemize}
Now pick an arbitrary state $s$, and consider its dynamical evolutes \[U(1)s, \,U(2)s,\, \ldots.\] Each one must lie in (at least) one of the $B_i$ (we can pick one arbitrarily if it lies in more than one) and so, since there are only finitely many of the $B_i$, we must be able to find $n,m$ ($n<m$) such that $U(n)s$ and $U(m)s$ both lie within the same $B_i$.

But if so, $d(U(n)s,U(m)s)<\epsilon$ ($B_i$ has radius $\epsilon/2$, so diameter $\epsilon$) and so, since the dynamics preserves $d$, $d(s,U(m-n)s)<\epsilon$: that is, the system is recurrent in metric.

We prove \emph{uniform} recurrence by the same trick used in the finite case: we construe the space of evolution operators as a dynamical system in its own right. Specifically: consider the space $\mc{U}(\mc{V})$ of linear transformations of \mc{V}. This too is a vector space, and it can be equipped with an inner product as follows:
\be 
\langle U,V\rangle=\tr (U^\dagger V).
\ee
Any norm-preserving $V$ satisfies $\langle V,V\rangle=\tr(V^\dagger V)=\tr(1)=N$, so that the set of all such $U$ is a bounded subset of $\mc{U}(\mc{V})$. If we define a dynamics on that set by
\be 
L_U(V)=UV
\ee
then we can readily verify that it is norm-preserving. Applying our existing result tells us that this dynamics is recurrent in metric, and so for any $\epsilon'$, there is some $n$ such that $\epsilon'>\tr((U(n)-1,U(n)-1)$. 

Now, for any operator $A$ and any vector $v$, $\tr(A^\dagger A)\langle v,v\rangle\geq \langle Av,Av$. And because the original state space is bounded, there is some $D$ such that $\langle s,s\rangle<D$ for any state $s$. So if we take $\epsilon'=\epsilon^2/ D$, we have
\[ 
\epsilon^2 > \tr(U(n)-1,U(n)-1) D > \tr(U(n)-1,U(n)-1)\langle s,s\rangle
\]
\be
\geq \langle U(n)s-s,U(n)s-s\rangle=d(U(n)s,s)^2.
\ee
$\Box$
\end{proof}
In fact, the result rests only on the existence of a preserved distance function, and on the fact that the space can be covered, for any $\epsilon$, with finitely many balls of radius  $\epsilon$. (This latter property is called \emph{total boundedness} in the theory of metric spaces.) Since the details are mildly fiddly, I postpone them to the appendix.

\section{Application: quantum mechanics}

In the quantum theory of pure states, quantum states are taken to be vectors of norm 1; in the theory of mixed states, they are taken to be self-adjoint operators of trace 1 all of whose eigenvalues are nonnegative. In both cases the states form a bounded subset of a linear space. So whenever the quantum theory's Hilbert space is finite-dimensional, the dynamics are uniformly recurrent.

In many cases of physical interest, though, we take the Hilbert space to be infinite-dimensional. In some cases the recurrence theorem can be extended to these situations also. Specifically, we can prove the
\begin{theorem}
\textbf{(Quantum recurrence theorem)} Given a quantum system with Hamiltonian $\op{H}$, if the set of eigenvalues of $\op{H}$ (a) is bounded below, (b) is discrete, and (c) has only finitely many elements within any bounded interval, then under the evolution operator $\op{U}=\exp(-i \op{H}\tau)$ (for arbitrary $\tau>0$),
\begin{enumerate}
\item[(i)]The dynamics are recurrent in metric;
\item[(ii)]For any $E$, the restriction of the dynamics to states of expected energy $<E$ is uniformly recurrent in metric.
\end{enumerate}
\end{theorem}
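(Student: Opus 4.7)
The plan is to reduce both parts to the linear recurrence theorem applied on a finite-dimensional invariant subspace carved out by an energy cutoff. Write $\op{H}=\sum_n E_n \op{P}_n$ in its spectral decomposition; by (a)--(c), together with the finite multiplicity of each eigenvalue that I take ``discrete'' to imply here, the subspace $\mc{H}_M$ spanned by eigenvectors with $E_n\leq M$ is finite-dimensional and invariant under $\op{U}=\exp(-i\op{H}\tau)$. The restriction of $\op{U}$ to $\mc{H}_M$ is then an inner-product-preserving dynamics on a bounded, finite-dimensional linear system, so the linear recurrence theorem applies to it directly and delivers uniform recurrence in metric on $\mc{H}_M$.

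For part (i), fix a state $\ket{\psi}=\sum_n c_n\ket{n}$ and $\epsilon>0$. Since $\sum_n|c_n|^2=1<\infty$, I can choose $M$ so that $\|\ket{\psi}-\op{P}_M\ket{\psi}\|<\epsilon/3$, where $\op{P}_M$ projects onto $\mc{H}_M$. The linear recurrence theorem supplies an $n$ with $\|\op{U}(n)\op{P}_M\ket{\psi}-\op{P}_M\ket{\psi}\|<\epsilon/3$, and unitarity gives $\|\op{U}(n)\ket{\psi}-\op{U}(n)\op{P}_M\ket{\psi}\|=\|\ket{\psi}-\op{P}_M\ket{\psi}\|<\epsilon/3$; one triangle inequality closes out $\|\op{U}(n)\ket{\psi}-\ket{\psi}\|<\epsilon$.

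For part (ii), the new ingredient is that the energy cutoff can be chosen uniformly. First shift $\op{H}\mapsto \op{H}-E_0\id$ where $E_0$ is its infimum, which only multiplies $\op{U}$ by an overall phase and so leaves all metric distances between evolved states intact; afterwards every $E_n\geq 0$. For any state $\ket{\psi}$ with $\matel{\psi}{\op{H}}{\psi}<E$, a Markov-type bound gives $\|\ket{\psi}-\op{P}_M\ket{\psi}\|^2=\sum_{E_n>M}|c_n|^2\leq M^{-1}\sum_n E_n|c_n|^2<E/M$, \emph{uniformly} in $\ket{\psi}$. Pick $M$ large enough that this is below $(\epsilon/3)^2$; apply uniform metric recurrence on the finite-dimensional $\mc{H}_M$ to extract a single $n$ with $\|\op{U}(n)\ket{\phi}-\ket{\phi}\|<\epsilon/3$ for every unit $\ket{\phi}\in\mc{H}_M$; then repeat the triangle-inequality step of part (i) to obtain uniform recurrence on the whole family.

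The one real obstacle is ensuring that the cutoff subspace $\mc{H}_M$ is genuinely finite-dimensional: this needs both the combinatorial input (c) on the set of distinct eigenvalues and the finite multiplicity of each (which I am reading into the ``discrete'' hypothesis; if that reading is not intended, one would need to argue it separately from the physical setup). The lower bound (a) is what makes the Markov estimate in (ii) work, and dropping it would allow states of small expected energy but substantial amplitude on eigenstates of arbitrarily large $|E_n|$, breaking the uniform truncation. Everything else --- unitarity, linearity, triangle inequality, and the appeal to the already-proved linear recurrence theorem --- is routine.
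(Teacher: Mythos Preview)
Your proof is correct and follows essentially the same route as the paper: truncate to a finite-dimensional energy-cutoff subspace, invoke the linear recurrence theorem there, and close with a three-term triangle inequality, using a Markov-type energy bound (after shifting the spectrum to be nonnegative) to make the truncation uniform in part (ii). The only cosmetic difference is that the paper indexes the cutoff by the number of eigenstates rather than by an energy threshold, and your explicit flag about finite multiplicity is a point the paper leaves implicit.
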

\begin{proof}
To begin with, let the energy eigenstates be given by $\ket{\varphi_i}$, with $\op{H}\ket{\varphi_i}=E_i\ket{\varphi_i}$, and with the energies in increasing order, so that if $i>j$, $E_i\geq E_j$. (That they can be so listed is a consequence of (a)-(c).) Write $\mc{H}_N$ for the Hilbert space spanned by states $\{\ket{\varphi_1},\ldots \ket{\varphi_N}\}$; this space has dimension $N$. $\op{U}$ leaves $\mc{H}_N$ invariant and so defines a dynamics on it; since $\mc{H}_N$ is finite-dimensional, this dynamics is uniformly recurrent.

Next, note that any given state can be expressed as a sum of energy eigenstates: $\ket{\psi}=\sum_i^\infty \alpha_i \ket{\varphi_i}$, where $\sum_i^\infty |\alpha_i|^2=1$. We can usefully define the partial sums $\ket{\psi_N}=\sum_i^N \alpha_i\ket{\varphi_i}$ (that is, as the projection of $\ket{\psi}$ onto $\mc{H}_N$, so that $\bk{\psi}{\psi_N}=\sum_i^N|\alpha_i|^N$. Using the linear space definition of distance, we can readily calculate 
\[
d(\ket{\psi},\ket{\psi}_N)=\sqrt{2\left(1-\sum_{i=0}^N|\alpha_i|^2\right)}=\sqrt{2\sum_{i=N+1}^\infty|\alpha_i|^2}.
\]

To prove (i), pick $\epsilon>0$. There will exist some $N$ such that $d(\ket{\psi},\ket{\psi_N})<\epsilon/3$. Since the dynamics on $\mc{H}_N$ are recurrent, there will also exist $m$ such that $d(\op{U}(m)\ket{\psi_N},\ket{\psi_N})<\epsilon/3$. By the triangle inequality,
\[
d(\op{U}(m)\ket{\psi},\ket{\psi})<d(\op{U}(m)\ket{\psi},\op{U}(m)\ket{\psi_N})\]
\be +d(\op{U}(m)\ket{\psi_N},\ket{\psi_N})+d(\ket{\psi_N},\ket{\psi}).
\ee
Since $\op{U}$ preserves Hilbert-space distance, the first and third terms are equal, and each is less than $\epsilon/3$; the result now follows.

To prove (ii), assume that all the energy eigenvalues are nonnegative (if this is not the case, we can just add a constant to $\op{H}$ without affecting the dynamics) and that there are eigenvalues of arbitrarily high energy (if not, then by (b) and (c), the system is finite-dimensional and so already known to be uniformly recurrent). Now pick $E>0$. For any $\lambda>0$,  there will exist some energy eigenvalue $E_N$ with $E_N>E/\lambda$. For any given state $\ket{\psi}$ with expected energy $<E$, we can decompose it as $\ket{\psi}=\ket{\psi_N}+\ket{\delta \psi}$, and we have 
\begin{eqnarray}
E & \geq & \matel{\psi}{\op{H}}{\psi} \\
& = & \matel{\psi_N}{\op{H}}{\psi_N}+\matel{\delta\psi}{\op{H}}{\delta\psi} \\
& \geq &\matel{\delta\psi}{\op{H}}{\delta\psi}\\
& = & \sum_{i=N+1}^\infty E_i |\alpha_i|^2\\
& \geq & E_N \sum_{i=N+1}^\infty |\alpha_i|^2 \\
& > &  (E/\lambda) \sum_{i=N+1}^\infty |\alpha_i|^2 \\
\end{eqnarray}
Rearranging,
\be 
2\lambda>d(\ket{\psi},\ket{\psi_N})^2.
\ee
Since $\lambda$ was chosen arbitrarily, we can in particular choose $N$ so that \be d(\ket{\psi},\ket{\psi_N})<\epsilon/3.\ee  (In other words, all states with expected energy below $E$ are within distance $\epsilon/3$ of $\mc{H}_N$). Since the dynamics are uniformly recurrent on $\mc{H}_N$, we can find $m$ such that for all states $\ket{\psi}$, $d(\op{U}(m)\ket{\psi_N},\ket{\psi_N})<\epsilon/3$. Repeating the argument in the proof of (i) tells us that $d(\op{U}(m)\ket{\psi}, \ket{\psi})<\epsilon$. $\Box$
\end{proof}

When will this somewhat abstract condition be realised? On heuristic grounds we would expect it to hold in nonrelativistic QM whenever the system is spatially bounded and has a potential bounded below: in this situation, a solution to the Schr\"{o}dinger equation can only have discretely many nodes, and any increase in nodes increases the average curvature of the solution and so the kinetic energy. In fact this can be formally proved in functional analysis (see \cite[p.1330]{dunfordschwartzII}). More generally, \cite{bekensteinbound} has advanced arguments to the effect that even in quantum field theory, there are only finitely many energy eigenstates of any spatially bounded theory below a given total energy.

If the theory is not spatially bounded, the spectrum may be continuous, and there is no guarantee of recurrence, just as in the classical case (the free particle is a trivial counterexample). 

\section{Another route to linear recurrence}

The above argument is not in fact how the quantum recurrence theorem was originally proved. In this section I sketch the standard proof, which has the advantage of making more transparent the way in which recurrence actually plays out, at the cost of blurring the continuity with the classical and discrete cases.

Firstly, suppose that the system's Hilbert space is finite-dimensional, and that the Hamiltonian has eigenvalues $E_1, \ldots E_N$ and eigenstates $\ket{\varphi_1},\ldots \ket{\varphi_N}$. Then any state $\ket{\psi}$ may be expanded in energy eigenstates, and we have
\be 
\op{U}(t)\ket{\psi}=\exp(-i \op{H}t)\ket{\psi}=\sum_{i=1}^N \alpha_i \e{-iE_i t}\ket{\varphi_i}.
\ee
Now suppose that all of the energy eigenvalues $E_1,\ldots E_N$ are rational numbers. Then for some time $t$, $E_i t = 2n_i \pi$ for some integers $n_i$. It follows that $\op{U}(t)\ket{\psi}=\ket{\psi}$: that is, the theory is uniformly recurrent.

But any real number is arbitrarily well approximated by a rational number! So this suggests that this result ought to hold (to an arbitrarily good approximation) for arbitrary $E_1,\ldots E_N$. This leads us to the theory of \emph{almost periodic functions}, and outside the scope of this article. But an elementary sketch proof (using our existing recurrence results!) would go as follows: for given $\tau$, the effect of $U(\tau)$ is to increase the phase of the $i$th eigenstate by $E_i \tau$. These phases take values between $0$ and $2\pi$; that is, the set of all phases lives on an $N$-dimensional torus. If we define a distance or volume function on that torus in the usual way, we can see that these phase increases preserve both. Applying the volume or metric versions of the recurrence theorem then tells us that to any degree of approximation, and for some $N$, applying the phase rotation $N$ times gets us back where we started to that degree of approximation. (This argument essentially follows \cite{Schulman1978}.)

\section{Two artefacts of classical mechanics}

In the literature on the foundations of statistical mechanics, two subtleties of the classical (Poincar\'{e}) recurrence theorem are often stressed:\footnote{See, \egc, \cite[pp.182--7,199--202]{sklarstatmech}, and references therein.}
\begin{enumerate}
\item Although the theorem establishes (roughly) that the states that do not recur have collective measure zero, nothing rules out the existence of anomalous states that never recur.
\item The theorem establishes recurrence, but not \emph{uniform} recurrence.
\end{enumerate}
The first point (often called the `problem of measure zero'), and its cousin in the theory of ergodicity, poses a supposed problem to any attempt to give a dynamical justification to the standard measures used in statistical mechanics: only by making measure-theoretic assumptions can we even get the recurrence theorem to tell us about actual systems. The second point is advanced as a supposed justification for the move from the Boltzmannian description of thermodynamical systems as represented by phase-space \emph{points}, to the Gibbsian description which represents them by phase-space \emph{distributions}. The idea here is that recurrence poses a threat to the monotonicity of entropy increase --- but if entropy is a function of distributions rather than individual states, there is no objection to its increasing indefinitely.

I make no comment on the \emph{conceptual} status of these concerns --- but we can see that technically speaking, they rely on features of classical dynamics which do not occur in quantum theory. The quantum recurrence theorem demonstrates \emph{uniform} recurrence, and it tells us that \emph{all} states return arbitrarily close to their starting point. So insofar as a given classical system fails to demonstrate these features, all we can learn is that the classical description must at some point fail to be a good approximation to any underlying quantum system. And since our only (non-historical) reason to describe a system via classical dynamics is because it is assumed to be a good approximation to some underlying quantum dynamics, these subtleties can be dismissed as classical artefacts, without conceptual significance.

(To be fair, at least the second point is sometimes noted. \cite{sklarstatmech}, in particular, notes (pp.200-1) that the Gibbsian approach does indeed fail in finite-volume quantum systems. But he suggests that the solution is to go to the thermodynamic limit: that is, to infinite-volume (but finite-density) systems. However, there is no reason to expect even \emph{non-uniform} recurrence to apply in that context, and  indeed, a move to the infinite-volume case would have saved classical \emph{Boltzmannian} statistical mechanics from any problem of recurrence. So it is hard to see what is really intended here.)

It is interesting to ask just what features of classical mechanics really prevent uniform recurrence. On possibility is the absence of a dynamically conserved distance function (let alone a dynamically preserved linear structure) on phase space. Classical dynamics preserves size but not shape\footnote{That is not to say that it preserves \emph{only} size: by no means are all volume-preserving maps symplectic maps.} and so allows a wider variety of possible dynamics.

However, the space of probability distributions over phase space does have a linear structure, and the dynamics on that space is likewise linear: the Liouville equation, $\dot{\rho}=\pb{\rho}{H}$, is a linear differential equation. Furthermore, if we restrict our attention to a finite-volume region of phase space and define an inner product on the space of density operators by 
\be 
\langle \rho,\sigma\rangle=\int \dr{x}\rho(x)\sigma(x)
\ee
(where the integral is taken with respect to Liouville measure) then it is easy to show that this inner product is preserved by the Liouville equation. In this case, the reason that classical dynamics fails to abide by the linear recurrence theorem is that the distribution space is infinite-dimensional, even if phase space has finite volume: distributions can have structure on arbitrarily short scales. 

Formally, quantum theory can also be done on the space of phase-space distributions: the Wigner function formalism\footnote{The Wigner function was first introduced in \cite{wignerfunction} and explored further by \cite{moyal}; see Hillery~\emph{et al}~\citeyear{hilleryetal} for a review of its properties.} maps each quantum state to a real function on phase space (with the proviso that such functions are not reliably nonnegative) and the resultant mapping of the Schr\"{o}dinger equation --- the \emph{Moyal equation} --- is equal to Liouville's equation to leading order in $\hbar$. But Winger functions of a given energy cannot vary on arbitrarily short lengthscales, essentially because of the uncertainty principle. We can again see the infinite dimensionality of the space of classical distributions as an artefact of classical physics.

\section{Conclusion}

Systems undergo recurrence, essentially, because there are only so many places they can go, and eventually they have to go back on themselves, and because if they have 1:1 dynamics, the only way they can do this is by returning to their starting point. Generalising this to state spaces with infinitely many states can be done via a conserved volume, or a conserved inner product, or a conserved metric. The first of these is most 
commonly seen, and applies to classical mechanics; the others, however, are more powerful, implying \emph{exceptionless}, \emph{uniform} recurrence, and are applicable to quantum mechanics. The relative weakness of the classical recurrence theorem is therefore a curiosity rather than a point of conceptual significance.

\appendix

\section*{Appendix: Recurrence in metric spaces}

To begin with some definitions: a \emph{pre-metric space} is a set $\mc{S}$ equipped with a function $d(x,y)$ from pairs of elements $x,y$ of the set to the nonnegative real numbers, such that for any $x,y,z\in \mc{S}$:
\begin{enumerate}
\item $d(x,x)=0$.
\item $d(x,y)=d(y,x)$;
\item $d(x,y,z)\leq d(x,y)+d(y,z)$ (the triangle inequality).
\end{enumerate}
(A \emph{metric space} satisfies in addition that if $d(x,y)=0$ then $x=y$, but we will not need this condition.)
A metric dynamical theory is then just a dynamical theory whose state space is equipped with a metric.

We now repeat some of the definitions used for linear spaces:
\begin{itemize}
\item Given a point $x$, and $\epsilon>0$, the \emph{ball of radius} $\epsilon$, $B_\epsilon (x)$ is the set $\{y\in\mc{S}:d(x,y)<\epsilon\}$.
\item A premetric space is \emph{totally bounded} if for any $\epsilon>0$, there are finitely many points $x_1,\ldots x_N$ such that
\[
B_\epsilon(x_1) \cup \cdots \cup B_\epsilon(x_N)=\mc{S}. 
\]
\item A map $U:\mc{S}\rightarrow \mc{S}$ is \emph{metric-preserving} if for any $x,y\in \mc{S}$, $d(x,y)=d(U(x),U(y))$.
\end{itemize}
The proof that any metric-preserving evolution map is recurrent in metric proceeds the same way as for linear spaces:
\begin{theorem}\label{metricrecurrence}
\textbf{(Metric recurrence theorem)} Any metric dynamical theory which is totally bounded and metric-preserving is recurrent in metric.
\end{theorem}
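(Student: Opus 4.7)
The plan is to follow exactly the same strategy used in the proof of the linear recurrence theorem, but to replace the explicit grid of balls built from an orthonormal basis with a direct appeal to total boundedness. Total boundedness is simply the right axiomatic abstraction of the ``state space can be covered by finitely many $\epsilon$-balls'' step used in the linear case, so no new geometric construction is needed.

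Concretely, I fix an arbitrary state $s$ and any $\epsilon>0$, and apply total boundedness at radius $\epsilon/2$ to choose finitely many centres $x_1,\ldots,x_N$ whose balls $B_{\epsilon/2}(x_i)$ cover $\mc{S}$. The infinite sequence of iterates $U(1)s,\,U(2)s,\,\ldots$ must then, by the pigeonhole principle, contain two members $U(n)s$ and $U(m)s$ (with $n<m$) lying in a common ball $B_{\epsilon/2}(x_i)$. Two applications of the triangle inequality yield $d(U(n)s,U(m)s)<\epsilon$.

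The remaining step is to transport this bound back to a bound between $s$ and some positive-time iterate. Writing $U(m)s = U(n)(U(m-n)s)$ and iterating the metric-preservation identity $n$ times --- by induction, $d(U(k+1)x,U(k+1)y) = d(U(U(k)x),U(U(k)y)) = d(U(k)x,U(k)y)$ --- gives $d(s,U(m-n)s) = d(U(n)s,U(m)s) < \epsilon$. Setting $n_s = m-n > 0$ then witnesses recurrence.

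The only point requiring mild care is that the appendix deliberately drops both the $d(x,y)=0\Rightarrow x=y$ axiom and any assumption that $U$ be invertible, so I cannot ``cancel'' $U$ as a map in the last step; what I cancel are distances, which is exactly what the iterated metric-preservation identity supplies. Beyond that bookkeeping, there is no real obstacle: linearity, invertibility, and the nondegeneracy of the metric all turn out to be inessential to the ordinary (non-uniform) half of the linear recurrence argument, which is why the result generalises from inner-product spaces to pre-metric spaces essentially verbatim.
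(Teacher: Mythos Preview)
Your proof is correct and follows essentially the same route as the paper's: cover $\mc{S}$ by finitely many $\epsilon/2$-balls via total boundedness, pigeonhole two iterates into a common ball, then use metric preservation to pull the resulting bound back to $d(s,U(m-n)s)<\epsilon$. Your explicit remark that the last step works by iterating the metric-preservation identity (rather than by inverting $U$) is a nice clarification of a point the paper leaves implicit.
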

\begin{proof}
Fix $\epsilon>0$ and $x\in \mc{S}$ (where $\mc{S}$ is the state space), and let $\{x_1,\ldots x_N\}$ be such that the union of the $B_{\epsilon/2}(x_i)$ is $\mc{S}$.  Since there are only finitely many $x_i$, we can find $n,m (n<m)$ and $i$ such that $U(n)x$ and $U(m)x$ are both in $B_{\epsilon/2}(x_i)$, and so by the triangle inequality, $d(U(n)x,U(m)x)<\epsilon$. Since $U$ is metric-preserving, $d(x,U(m-n)x)<\epsilon$.
\end{proof}

To extend this to uniform recurrence, we can use the usual trick of applying recurrence to dynamics on the space of metric-preserving maps. To do so, we need to make this space into a totally bounded metric space. 

To do so, first note that a totally bounded space is also bounded: that is, for some $D>0$, $d(x,y)<D$ for all $x$ and $y$. (If the balls of radius (say) 1 around $x_1\ldots x_N$ contain all points in the space, then for some $D'>0$ and all $i,j$, $d(x_i,x_j)<D'$, and so by the triangle inequality, $d(x,y)<D'+2$.)
If $\mc{U}(\mc{S},\mc{T})$ is the space of metric-preserving functions from $\mc{S}$ to $\mc{T}$, and $\mc{S}$ and $\mc{T}$ are totally bounded metric spaces, we can then make $\mc{U}(\mc{S},\mc{T})$ into a metric space by defining 
\[
d(f,g)=\mathrm{max}_x d(f(x),g(x)).
\]
We can prove the following lemma:
\begin{description}
\item[Lemma:] If $\mc{S}$ and $\mc{T}$ are totally bounded premetric spaces (with metrics $d_S$, $d_T$), so is $\mc{U}(\mc{S},\mc{T})$ (with the metric $d$ defined above).
\end{description}
\begin{proof} Pick $\epsilon>0$, and pick $x_1\ldots x_N\in \mc{S}$, $y_1,\ldots y_M\in \mc{T}$ such that 
\[
B_{\epsilon/4}(x_1) \cup \cdots \cup B_{\epsilon/4}(x_N)=\mc{S} 
\]
\[
B_{\epsilon/4}(y_1) \cup \cdots \cup B_{\epsilon/4}(y_M)=\mc{T} .
\]
Now let $\Pi$ be the set of functions from $\{x_1,\ldots x_N\}$ to $\{y_1,\ldots y_M\}$, and for each $\pi\in\Pi$, define $F_\pi\in\mc{U}(\mc{S},\mc{T})$ to be the set of metric-preserving functions such that $f(x_i)\in B_{\epsilon/4}(\pi(x_i))$ for each $i$. Suppose $f$ and $g$ are both elements of some $F_\pi$, and that $x\in B_{\epsilon/4}(\pi(x_j))$. Then by the triangle inequality,
\[ 
d_T(f(x),g(x))\leq d_T(f(x),f(x_i)) + 
\]
\be
d_T(f(x_i),\pi(x_i))+d_T(\pi(x_i),g(x_i))+d_T(g(x_i),g(x)).
\ee
Since $f$ and $g$ are metric-preserving, $d_T(f(x),f(x_i))=d_S(x,x_i)<\epsilon/4$; since $f\in F_\pi$, $d_T(f(x_i),\pi(x_i))<\epsilon/4$; the same holds for $g$. So we conclude that $d_T(f(x),g(x))<\epsilon$.

Now for each non-empty $F_\pi$, pick an arbitrary $f_\pi\in F_\pi$. $F_\pi\subset B_\epsilon(f_\pi)$, so it follows that the union of all the  $B_\epsilon(f_\pi)$ is $\mc{U}(\mc{S},\mc{T})$. Since there are only finitely many of these, $\mc{U}(\mc{S},\mc{T})$ is totally bounded.
\end{proof}

In particular, $\mc{U}(\mc{S},\mc{S})$ is totally bounded. The dynamics defined by $L_U(f)=U \cdot f$ is a metric-preserving evolution on this space, and so by theorem \ref{metricrecurrence}, for any $\epsilon$ we can find $n$ such that $d(U(n)1,1)<\epsilon$, and hence $d_S(U(n)x,x)<\epsilon$ for any $x$. That is, we have proved
\begin{theorem}
\textbf{(Uniform metric recurrence theorem)} Any metric dynamical theory which is totally bounded and metric-preserving is uniformly recurrent in metric.
\end{theorem}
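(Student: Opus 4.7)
The plan is to recycle the now-familiar lifting trick: turn the space of metric-preserving self-maps of $\mc{S}$ into a totally bounded metric dynamical system in its own right, apply Theorem \ref{metricrecurrence} there, and then read off uniform recurrence on $\mc{S}$ from ordinary recurrence upstairs. This is exactly the move that produced the finite uniform recurrence theorem and the uniform half of the linear recurrence theorem, so the strategy is mandated by the pattern of the paper; the work is just to check that the pieces fit.

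Concretely, I would first invoke the Lemma just proved, specialised to $\mc{T}=\mc{S}$: it tells us that $\mc{U}(\mc{S},\mc{S})$, equipped with the sup metric $d(f,g)=\mathrm{max}_x\, d_S(f(x),g(x))$, is itself a totally bounded premetric space. Second, I would define the dynamics $L_U(f)=U\cdot f$ on $\mc{U}(\mc{S},\mc{S})$ and verify the two conditions needed to apply Theorem \ref{metricrecurrence}: that $L_U$ sends $\mc{U}(\mc{S},\mc{S})$ to itself (the composition of two metric-preserving maps is metric-preserving) and that $L_U$ is itself metric-preserving with respect to the sup metric. The latter is the one small calculation: for any $f,g\in\mc{U}(\mc{S},\mc{S})$ and any $x\in\mc{S}$, metric-preservation of $U$ gives $d_S(U(f(x)),U(g(x)))=d_S(f(x),g(x))$, and taking the maximum over $x$ on both sides yields $d(L_U(f),L_U(g))=d(f,g)$.

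Third, Theorem \ref{metricrecurrence} applied to the lifted system $(\mc{U}(\mc{S},\mc{S}),L_U)$ gives, for any $\epsilon>0$, some $n$ with $d(L_U^n(1),1)<\epsilon$, where $1$ is the identity map (a legitimate element of $\mc{U}(\mc{S},\mc{S})$ since the identity trivially preserves $d_S$). Since $L_U^n(1)=U(n)$, this unpacks to $\mathrm{max}_{x}\, d_S(U(n)x,x)<\epsilon$, and in particular $d_S(U(n)x,x)<\epsilon$ for every $x\in\mc{S}$ — which is precisely uniform recurrence in metric.

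I do not expect a substantial obstacle here: the genuine content has already been absorbed into the Lemma (showing that the lifted space is totally bounded is the only delicate step, and it is done) and into Theorem \ref{metricrecurrence}. The only place where a proof needs to pause and be careful is the short verification that both the class of metric-preserving maps and the sup metric are preserved under left-composition by $U$; this is bookkeeping rather than a real difficulty.
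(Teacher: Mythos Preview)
Your proposal is correct and is essentially identical to the paper's own argument: specialise the Lemma to $\mc{T}=\mc{S}$, lift the dynamics to $\mc{U}(\mc{S},\mc{S})$ via $L_U(f)=U\cdot f$, apply Theorem~\ref{metricrecurrence} to the identity, and unpack $d(U(n)1,1)<\epsilon$ as uniform recurrence. If anything, you are slightly more explicit than the paper in checking that $L_U$ preserves both the class of metric-preserving maps and the sup metric.
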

For readers more familiar with the topological notion of \emph{compactness}, note that any compact metric space is totally bounded\footnote{The converse is not true, but any \emph{complete} totally bounded metric space is compact. For an elementary proof, see \cite[p.141]{sutherland}.} (for given $\epsilon>0$, consider the open cover consisting of balls of radius $\epsilon$ around every point in the space) and so, as a corollary of theorem~\ref{metricrecurrence}, any metric-preserving dynamical theory with a compact state space is uniformly recurrent in metric.

\end{document}